\documentclass[conference,letterpaper]{IEEEtran}
\addtolength{\topmargin}{9mm}

\usepackage[utf8]{inputenc} 
\usepackage[T1]{fontenc}
\usepackage{url}
\usepackage{ifthen}
\usepackage{cite}
\usepackage{bm}
\usepackage[cmex10]{amsmath} 
\usepackage{cite}
\usepackage{amssymb,amsfonts}
\usepackage{algorithmic}
\usepackage{graphicx}
\usepackage{textcomp}
\usepackage{xcolor}
\usepackage{epstopdf}
\usepackage{diagbox}
\usepackage{amsthm}
\usepackage{float}
\usepackage{subfigure}
\newtheorem{Le}{Lemma}
\newtheorem{theo}{Theorem} 
\newtheorem*{con}{Conjecture}

\interdisplaylinepenalty=2500 

\hyphenation{op-tical net-works semi-conduc-tor}

\begin{document}
\title{The Complete Affine Automorphism Group of Polar Codes} 

 \author{%
   \IEEEauthorblockN{Yuan Li\IEEEauthorrefmark{1}\IEEEauthorrefmark{2}\IEEEauthorrefmark{3},
                     Huazi Zhang\IEEEauthorrefmark{1},
                     Rong Li\IEEEauthorrefmark{1},
                     Jun Wang\IEEEauthorrefmark{1},
                     Wen Tong\IEEEauthorrefmark{1},
                     Guiying Yan\IEEEauthorrefmark{2}\IEEEauthorrefmark{3},
                     and Zhiming Ma\IEEEauthorrefmark{2}\IEEEauthorrefmark{3}}
   \IEEEauthorblockA{\IEEEauthorrefmark{1}%
                     Huawei Technologies Co. Ltd.}
  \IEEEauthorblockA{\IEEEauthorrefmark{2}%
                     University of Chinese Academy of Sciences}
   \IEEEauthorblockA{\IEEEauthorrefmark{3}%
                     Academy of Mathematics and Systems Science, CAS }
    Email: liyuan181@mails.ucas.ac.cn, \{zhanghuazi, lirongone.li, justin.wangjun, tongwen\}@huawei.com,\\
           yangy@amss.ac.cn, mazm@amt.ac.cn 

 }

\maketitle

\begin{abstract}
Recently, a permutation-based successive cancellation (PSC) decoding framework for polar codes attaches much attention. It decodes several permuted codewords with independent successive cancellation (SC) decoders. Its latency thus can be reduced to that of SC decoding. However, the PSC framework is ineffective for permutations falling into the lower-triangular affine (LTA) automorphism group, as they are  invariant under SC decoding. As such, a larger block lower-triangular affine (BLTA) group that contains SC-variant permutations was discovered for decreasing polar codes. But it was unknown whether BLTA equals the complete automorphism group. In this paper, we prove that BLTA equals the complete automorphisms of decreasing polar codes that can be formulated as affine trasformations.
\end{abstract}

\section{Introduction}
Polar codes \cite{b1}, invented by Ar{\i}kan, are a great break through in coding theory. As code length $N = 2^n$ approaches infinity, the synthesized channels become either noiseless or pure-noise, and the fraction of the noiseless channels approaches channel capacity. Thanks to channel polarization, efficient  SC decoding algorithm can be implemented with a complexity of $O(NlogN)$. However, the performance of polar codes under SC decoding is poor at short to moderate block lengths.

To boost finited-length performance, a successive cancellation list (SCL) decoding algorithm was proposed \cite{b2}. As list size $L$ increases, the performance of SCL decoding approaches that of maximum-likehood (ML) decoding. Accordingly, code construction is optimized for SCL decoding, e.g., CRC-aided (CA) \cite{b3} and parity-check (PC) \cite{b34}\cite{b30} polar codes. But in practice, a majority of SCL decoding complexity and latency is induced by path management, i.e., sorting and pruing paths according to path metric (PM). Recently, a PSC decoding framework \cite{b9}\cite{b26}\cite{b23} propose to decode $L$ permuted instances of the received codeword, and recover the most likely one in the end. In contrast to SCL decoding, these instances are independently decoded and do not require path management. Apparently, PSC decoding requires the permutations to be SC-variant. That is, instance SC decoders output distinct decoding results to achieve diversity gain. Current PSC decoders include stage permutation list decoding \cite{b9} that exploits stage permutations \cite{b20} and automorphism ensemble (AE) decoding \cite{b26}\cite{b23} that exploits the rich permutations found in polar automorphism groups. For a decreasing polar codes, there may not be enough SC-variant automorphisms available. Stage permutations can be included, although some of them do not fall into automorphism group. In \cite{b6} \cite{b7}, stage permutations are used to reduce the decoding complexity for RM codes. For polar codes, permutation decoding achieves similar performance to SCL in some cases with belief propagation (BP) \cite{b24} and SC \cite{b9} as instance decoders. 

The study of polar automorphism is inspired by \cite{b4}, where Reed-Muller (RM) codes are represented by monomials. The automorphism group of RM codes is shown to be affine transformation group of order $n$, denoted by $GA(n)$.  In \cite{b21}, polar codes  with partial order\cite{b31} are viewed as decreasing monomial codes, whose automorphism group includes the aforementioned LTA group. As an application, $\frac{N}{4}$-cyclic shift is proposed for implicit timing indication in Physical Broadcasting Channel (PBCH) \cite{b5}. In \cite{b23}, Geiselhart \emph{et al}. proposed an efficient algorithm to find permutations defined in a larger-than-LTA group called BLTA. The BLTA group is shown to be a subgroup of automorphism group and the authors further conjecture that the BLTA group is equal to polar automorphism group. Aiming at better PSC decoding performance, automorphisms in the upper-triangular linear (UTA) group are designed in \cite{b25}. A brief overview of the related works is illustrated in Fig. \ref{time}:
\begin{figure}[htbp]
	\centerline{\includegraphics[width=0.5\textwidth]{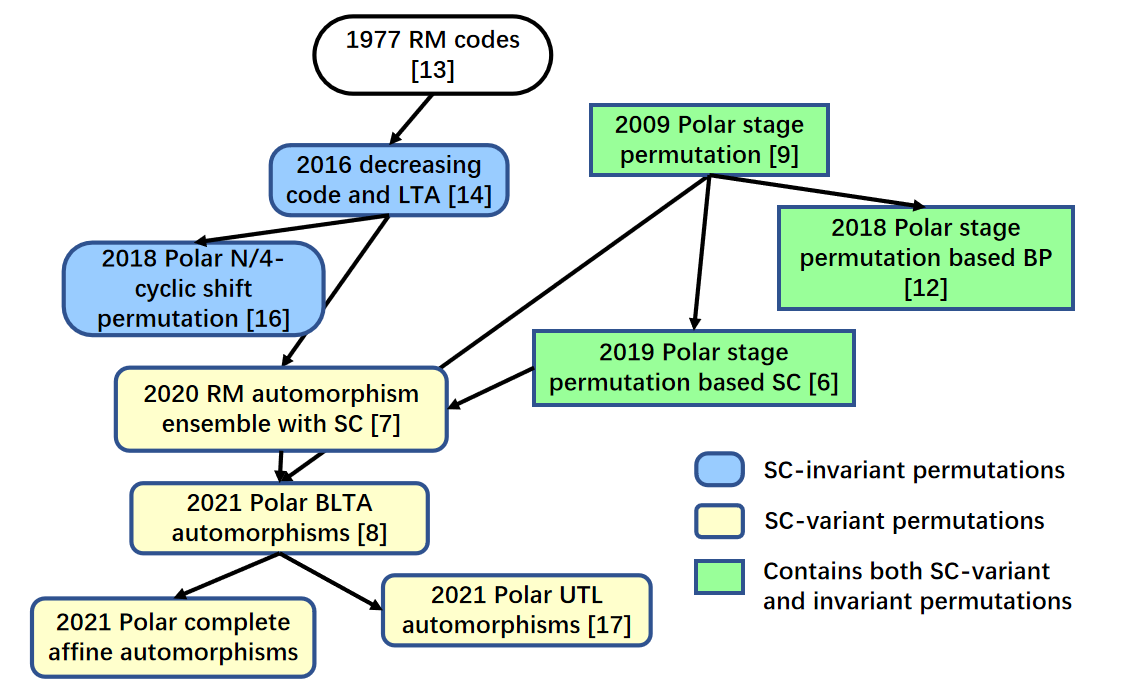}}
	\caption{Related works on automorphism group of RM and Polar codes}
	\label{time}
\end{figure}

In this paper, we prove that for decreasing codes, permutations in  BLTA are the complete affine transformation automorphisms, a conjecture that is a bit more constrained than that in \cite{b23}. This paper is organized as follows. In section II, we review the background of polar code automorphism groups. In section III we provide the proof. Finally we draw conclusions in section IV.

\section{Background}

\subsection{Polar Codes as Monomial Codes}

Given a B-DMC $W: \{ 0,1 \} \rightarrow \mathcal{Y}$,  the channel transition
probabilities are defined as $W(y|x)$, where $y \in \mathcal{Y} , x \in \{ 0,1 \}$. $W$ is said to be symmetric if there is a permutation $\pi$, such that $ \forall$ $y \in \mathcal{Y}$, $W(y|1)=W(\pi(y)|0)$ and $\pi^2 = id$.

Then the symmetric capacity and the Bhattacharyya parameter of  $W$ are defined as
\begin{equation*}
I(W) \triangleq \sum_{y \in \mathcal{Y}} \sum_{x \in \mathcal{X}} \frac{1}{2} W(y \mid x) \log \frac{W(y \mid x)}{\frac{1}{2} W(y \mid 0)+\frac{1}{2} W(y \mid 1)}
\end{equation*}
and
\begin{equation*}
Z(W) \triangleq \sum_{y \in \mathcal{Y}} \sqrt{W(y \mid 0) W(y \mid 1)}
\end{equation*}

Let
$
F=\left[\begin{array}{ll}
1 & 0 \\
1 & 1
\end{array}\right]
$, $N=2^n$, and $H_N=F^{\otimes n}$. Starting from $N = 2^n$ independent channels $W$, we obtain $N$ polarized channels $W_N^{(i)}$, after channel combining and splitting operations \cite{b1}, where
\begin{equation*}
W_{N}\left(y_{1}^{N}|u_1^N \right) \triangleq W^N \left(y_1^N|u_1^NH_N\right)
\end{equation*}
\begin{equation*}
W_{N}^{(i)}\left(y_{1}^{N}, u_{1}^{i-1} \mid u_{i}\right) \triangleq \sum_{u_{i+1}^{N} \in \mathcal{X}^{N-i}} \frac{1}{2^{N-1}} W_{N}\left(y_{1}^{N} \mid u_{1}^{N}\right)
\end{equation*}

Polar codes can be constructed by selecting the indices of $K$ information sub-channels, denoted by the information set $\mathcal{A} = \left\{ I_1,I_2,\dots,I_K \right\}$. The optimal sub-channel selection criterion for SC decoding is reliability, i.e., selecting the $K$ most reliable sub-channel as information set. Density evolution (DE) algorithm\cite{b10}, Gaussian approximation (GA) algorithm\cite{b11} and the channel-independent PW construction method\cite{b12} are efficient methods to find reliable sub-channels.

In particular, polar codes can be expressed as monomial codes\cite{b21}. From this point of view, each synthetic channel can be represented by a monomial with $n$ binary variable \{$x_i$\}, $0 \leq i \leq n-1$, and the monomial set can be denoted by
$$\mathcal{M}_n \overset{def}{=} \{ x_{0}^{g_{0}} x_{1}^{g_{1}}\ldots x_{n-1}^{g_{n-1}} |({g_{0}},{g_{1}},\dots,{g_{n-1}}) \in \mathbf F_2^n \}$$
For instance, $f= x_{i_1}x_{i_2}\ldots x_{i_s}$. The degree of $f$ is $s$, denoted by $deg(f)$. Each row of the $H_N$ can be expressed as a monomial, and thus we can use a subset of $\mathcal{M}_n$ to denote the polar code. For convenience, we denote the information set by $M$, and the polar code spanned by $M$ as $C(M)$.

\subsection{Decreasing Monomial Codes}
It was presented in \cite{b21} and \cite{b31} that
the reliablity of synthetic channels follows a partial order ``$\preceq$''. If $f,g\in \mathcal{M}_n$, $g \preceq f$ means $g$ is universally more reliable than $f$. 
For monomials of the same degree, partial order is defined as 
$$x_{i_1}\ldots x_{i_r} \preceq x_{j_1}\ldots x_{j_r} \iff i_k \leq j_k, \ 1 \leq k \leq r$$ 
and for monomials of different degree
$$g\preceq f \iff \exists f^*\mid f, \  deg(f^*)=deg(g), \  and \ g\preceq f^*$$.

A decreasing monomial code $C(M)$ is a monomial code satisfying partial order. i.e.
$$
\forall f\in M \ and \ g\in \mathcal{M}_n,\ if\ g\preceq f\Rightarrow g\in M 
$$

In practice, many polar codes can be regarded as decreasing monomial codes, i.e., we can find the ``largest" monomials $M_{min}$ as generators \cite{b23}, and the information set $M$ can be defined as: 
$$
M=\bigcup\limits_{f\in M_{min}}\{g\in \mathcal{M}_n \mid g\preceq f \}
$$

\subsection{Automorphisms of Decreasing Monomial codes}
The automorphism group $Aut(M)$ of $C(M)$ is defined as the set of permutations $\pi \in S_N$, where $N$ is the code length of $C(M)$, $\pi \in Aut(M)$ if and only if
$$ \pi(c) \in C(M), \ \forall \ c \in C(M)$$
where
$$\pi(c)_{i} = c_{\pi(i)}, \ 0 \leq i \leq N-1$$

It's well known that the automorphism group of Reed-Muller codes of length $N=2^n$ is given by the affine transformation group $GA(n)$ \cite{b4}, that is 

\begin{equation}\label{key1}
X\stackrel{(\bm{A},\bm{b}) \in GA(n)}{\longrightarrow} Y=\bm{A}X+ \bm{b}
\end{equation}
with $X,Y\in \mathbb{F}_2^n $ and $\bm{A}$ an $n\times n$ binary invertible matrix plus a binary column vector $\bm{b}$  of length $n$. In \cite{b21}, it is shown that the automorphism group of a decreasing monomial code contains at least $LTA(n)$, where $\bm{A}$ is a lower triangular matrix. Then, \cite{b23} prove that  BLTA is a larger automorphism subgroup containing the LTA. The BLTA is in the form of (\ref{key1}), where $\bm{A}$ is shown in Fig. \ref{fig1}:
\begin{figure}[htbp]
	\centerline{\includegraphics[width=0.3\textwidth]{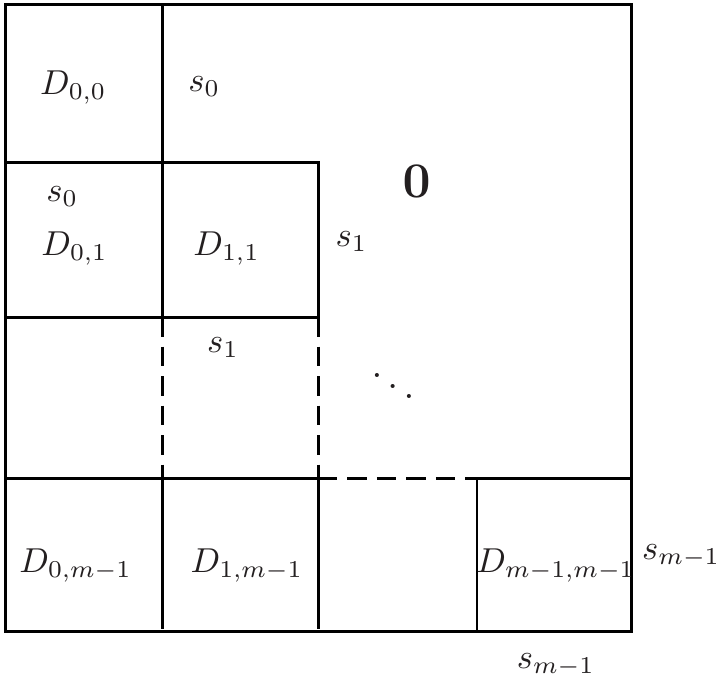}}
	\caption{The block lower triangular matrix $\bm{A}$ of the affine transformation \cite{b23} }
	\label{fig1}
\end{figure}
$\bm{D}_{i,i}$ in the diagonal are invertible binary random matrixs of size $s_i\times s_i$ and  $\bm{D}_{i,j\neq i}$ are binary random matrices. 

As indicated in \cite{b25}, not all automorphisms of decreasing monomial codes can be represented by affine transformations. Up to now, we do not have a unified framework to analyze the automorphisms which can not be viewed as affine transformations. In the paper, we mainly focus on the affine transformation automorphisms of decreasing polar codes. 

\subsection{Permutation-based SC Decoding}
The PSC decoding framework \cite{b9}\cite{b26}\cite{b23} is shown in Fig. \ref{fig2}. $L$ different permutations are applied to the received vector $y$. Each is decoded by an SC-based decoder to obtain a permuted codeword $x'_i$, which is deinterleaverd to $x_i$. Finally, the most likely candidate codeword is selected as the decoding output.

The permutations used in PSC decoding need to be carefully selected. It is proved in \cite{b26} that permutations from LTA are SC-invariant, i.e. $SC(\pi(L_{ch}))=\pi(SC(L_{ch}))$. This renders PSC decoding useless because all SC decoder instances output the same codeword. In order to improve PSC decoding performance, we need to find more SC-variant permutations. They can be obtained either by permuting the stages of factors graph \cite{b9} \cite{b20}, or by exploring a larger automorphism group \cite{b26} \cite{b23} \cite{b25}.
\begin{figure}[htbp]
	\centerline{\includegraphics[width=0.5\textwidth]{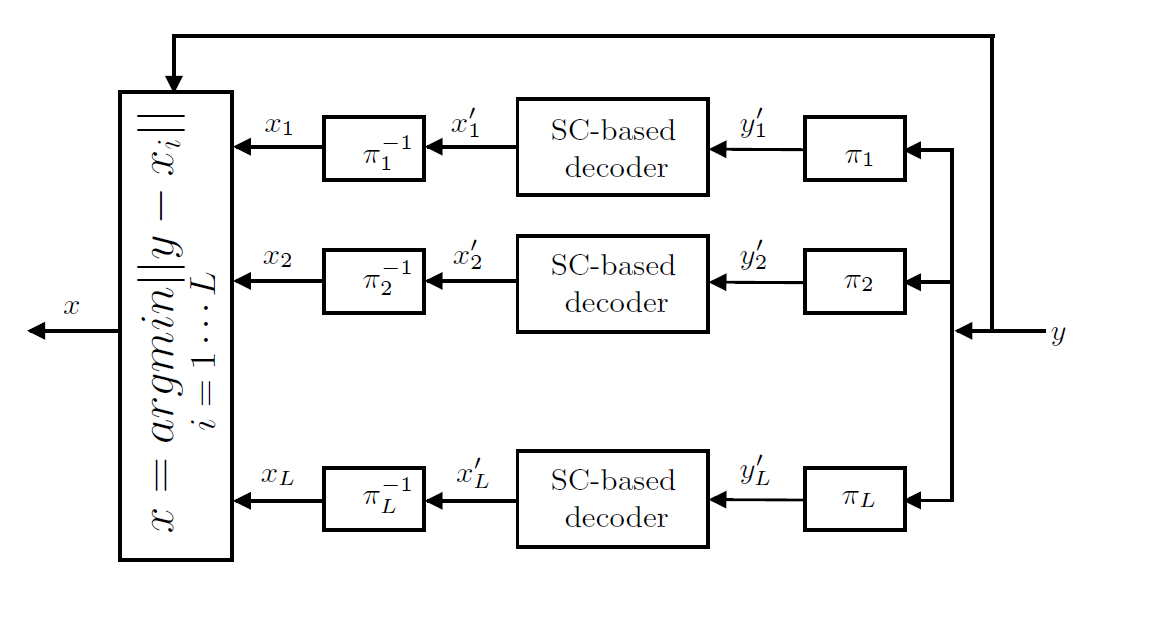}}
	\caption{The PSC decoding framework \cite{b9} \cite{b26} \cite{b23} }
	\label{fig2}
\end{figure}

\section{Analysis on Automorphisms of Decreasing Monomial Codes}
In this section, we prove that for decreasing codes, all the automorphisms that can be expressed as affine transformations are equal to BLTA. 

\subsection{Notations and Definitions}
Let $[i,j] \triangleq [i,i+1,\dots,j]$. Let $Aut(M)$ be the automorphism group of $C(M)$. $BLTA(s,n)$ is the same as defined in \cite{b23}.

We define the affine automorphism group to be the subgroup of $Aut(M)$, which includes all the automorphisms that can be expressed as affine transformations. And the affine automorphism group of monomial codes $M$ is abbreviated as $A\text{-}Aut(M)$.

Let $S_{ \{ i_1,\dots,i_k \}}$ denote the permutation group of the set $\{i_1,\dots,i_k\}$, in particular, $S_n$ is the permutation group of the set $[0,n-1]$. A permutation $\pi \in S_n$ can be expressed as $(\bm{A}_{\pi},\bm{0})$, where $\bm{A}_{\pi}$ is a $n \times n$ permutation matrix. 

If $(\bm{A},\bm{0}) \in A \text{-}Aut(M)$, we abbreviate it to $\bm{A} \in A \text{-}Aut(M)$. When $\bm{A} = \bm{A}_{\pi}$, we further abbreviate it to $\pi \in A \text{-}Aut(M)$.

For a $n\times n$ matrix $\bm{A}$, let $\bm{A}_{\{i_1,\dots,i_s\},\{j_1,\dots,j_t\}}$ be the $s \times t$ corresponding submatrix of $\bm{A}$. Where $i_1,\dots,i_s$ are distinct integers and $j_1,\dots,j_t$ are distinct integers as well.

\subsection{Analysis on Automorphisms}
In \cite{b23}, the author proved that $BLTA(s,n) \subseteq Aut(M)$, and further conjectured that the equality holds. That is, BLTA group is equal to polar automorphism group. However, the conjecture seems a bit too aggressive as shown in \cite{b25}, where numerical experiments found some permutations in $Aut(M)$ are outside BLTA. Therefore, $BLTA(s,n) \neq Aut(M)$.

However, the conjecture would be accurate if we focus on affine automorphisms. In this paper, we prove the following conjecture for all decreasing monomial codes, in which decreasing polar codes and RM codes are special cases. 
\begin{con}
$$BLTA(s,n) = A\text{-}Aut(M)$$    
\end{con}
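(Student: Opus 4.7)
Since $BLTA(s,n)\subseteq A\text{-}Aut(M)$ with the canonical partition $s=(s_1,\dots,s_l)$ attached to $M$ is already proved in \cite{b23}, the whole task is the reverse inclusion $A\text{-}Aut(M)\subseteq BLTA(s,n)$. An affine automorphism $(\bm A,\bm b)$ acts on the evaluation codeword of $f\in M$ by pullback $X\mapsto \bm A X+\bm b$, so the condition becomes: for every $f\in M$, the reduced Boolean polynomial $\tilde f(X)\equiv f(\bm A X+\bm b)\pmod{x_i^2-x_i}$ lies in $\operatorname{span}(M)$. The plan is to run the whole argument at this polynomial level.

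The central step is a proof by contradiction. Suppose $\bm A$ does not have the block lower triangular shape of Fig.~\ref{fig1}; then some nonzero entry $A_{pq}$ places $x_q$ in the affine expansion of $y_p$, with $p$ lying in an earlier block than $q$. A short preliminary lemma, extracted from the defining property of the block partition, produces a monomial witness of this block separation: a monomial $f_0\in M$ with $x_p\mid f_0$ and $x_q\nmid f_0$ such that $g_0=(f_0/x_p)\cdot x_q \notin M$. I would expand $f_0(\bm A X+\bm b)$, reduce modulo $x_i^2-x_i$, and extract the coefficient of $g_0$ in the result. Requiring $\tilde f_0\in\operatorname{span}(M)$ forces this coefficient, and more generally the coefficients of all monomials outside $M$, to vanish; I would accumulate such vanishing relations from several witnesses $f_0$ and show that, together with $A_{pq}\neq 0$, they force a row dependence in $\bm A$, contradicting invertibility. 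Once block lower triangularity is established, invertibility of each diagonal block $\bm D_{i,i}$ is immediate from invertibility of $\bm A$.

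The main obstacle is the book-keeping of constraints and cancellation. Expanding $\prod_k\bigl(\sum_j A_{i_k,j}x_j+b_{i_k}\bigr)$ and folding via $x_j^2=x_j$ yields many Boolean monomials, and several distinct monomials outside $M$ produce many polynomial relations among the entries of $\bm A$ that interact in a complicated way. To tame this I would proceed by a nested induction. The outer induction is on the block column, from right to left, so that when studying the entry $A_{pq}$ I may assume all strictly later block columns of $\bm A$ have already been shown to vanish above the block diagonal and that the corresponding rows already conform to the BLTA shape, drastically pruning the expansion. The inner step selects $f_0$ of minimal degree still witnessing $g_0\notin M$, and then a leading-term argument in a reverse-lex order on the active variables isolates an uncancellable contribution to the coefficient of $g_0$, producing the row dependence in $\bm A$ that contradicts invertibility and closes the induction step, yielding $BLTA(s,n)=A\text{-}Aut(M)$.
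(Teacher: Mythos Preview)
Your overall logical frame (take a witness $f_0\in M$ with $g_0\notin M$, expand $f_0(\bm A X+\bm b)$, and force the coefficient of $g_0$ to vanish) is sound, and your preliminary lemma is essentially the definition of the canonical block partition. But the decisive step---``accumulate such vanishing relations \dots they force a row dependence in $\bm A$''---is where the proposal stops being a proof. The coefficient of $g_0$ in $f_0(\bm A X)$ is exactly $\det\bigl(\bm A_{\{\text{indices of }f_0\},\{\text{indices of }g_0\}}\bigr)$ (this is the paper's Lemma~1), so each witness yields one vanishing $(r{+}1)\times(r{+}1)$ minor of $\bm A$ containing the $(p,q)$ entry. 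Vanishing of a family of such minors does \emph{not}, in general, force a global row dependence in $\bm A$; your reverse-lex leading-term sketch does not explain how a cancellation internal to a single minor propagates to singularity of the full matrix, and the outer induction on block columns only zeroes out entries in \emph{later} block columns, which does not by itself make the remaining minors tractable. This is a genuine gap.

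The paper closes exactly this gap by a device you do not invoke: it exploits $LTA(n)\subseteq A\text{-}Aut(M)$ to \emph{modify} $\bm A$ rather than to constrain it. First it strips off $\bm b$ and reduces to adjacent positions (Theorem~2 $\Rightarrow$ Theorem~1) via LTA row/column operations. Then, for a given monomial $x_{i_1}\cdots x_{i_r}x_i\in M$, it iteratively adds later columns of $\bm A$ to earlier ones (still an LTA move, hence still inside $A\text{-}Aut(M)$) to manufacture $\bm A^{(r)}$ with $\det\bigl(\bm A^{(r)}_{\{s_1,\dots,s_r,i\},\{i_1,\dots,i_r,i+1\}}\bigr)\neq 0$ for some $s_k\le i_k$. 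By Lemma~1 that nonzero minor means $x_{i_1}\cdots x_{i_r}x_{i+1}$ occurs in the image of $x_{s_1}\cdots x_{s_r}x_i\preceq x_{i_1}\cdots x_{i_r}x_i\in M$ under $\bm A^{(r)}$, hence lies in $M$. So the paper works in the \emph{positive} direction---it forces the relevant minor to be nonzero and concludes $(i,i+1)\in A\text{-}Aut(M)$---instead of assuming $g_0\notin M$ and trying to squeeze singularity of $\bm A$ out of vanishing minors. If you want your contradiction route to go through, the missing ingredient is precisely this freedom to pre- and post-multiply by $LTA(n)$; without it, the nested induction you outline does not close.
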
  

According to \cite{b23}, we only need to prove that if $(\bm{A},\bm{b}) \in A$-$Aut(M)$ and $a_{i,j} = 1, i < j$, then $(i,j) \in A\text{-}Aut(M)$. It amounts to proving the following Theorem.

\begin{theo}
Let $C(M)$ be a decreasing monomial code in $n$ variables with information set $M$. If $\exists \ (\bm{A},\bm{b}) \in A \text{-}Aut(M), a_{i,j}=1, i < j$, then 
$$\pi = (i,j) \in A \text{-}Aut(M) $$
according to \cite{b23},\\
\begin{center}
$BLTA(s,n)=A$-$Aut(M)$
\end{center}
\end{theo}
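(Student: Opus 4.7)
The plan is to reduce the statement, via the Bruhat decomposition of $\bm{A}$, to the purely permutational case and then to close it by a short counting bijection on the monomials of $M$. Concretely, I would first apply the Bruhat decomposition of $GL_n(\mathbb{F}_2)$ with respect to the lower-triangular Borel, writing $\bm{A}=\bm{L}_1\bm{P}\bm{L}_2$, where $\bm{L}_1,\bm{L}_2$ are lower unitriangular and $\bm{P}$ is the permutation matrix of some $\rho\in S_n$. Since $LTA(n)\subseteq A\text{-}Aut(M)$, both $(\bm{L}_1^{-1},\bm{0})$ and $(\bm{L}_2^{-1},\bm{0})$ lie in $A\text{-}Aut(M)$, and a further composition with a suitable translation absorbs $\bm{b}$, yielding $(\bm{P},\bm{0})\in A\text{-}Aut(M)$. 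Expanding the product gives $a_{i,j}=\sum_k(\bm{L}_1)_{i,k}(\bm{L}_2)_{\rho(k),j}$, whose only potentially nonzero summands require $k\le i$ and $\rho(k)\ge j$; hence $a_{i,j}=1$ supplies some $k\le i$ with $l:=\rho(k)\ge j$, and in particular $k\le i<j\le l$, so $k<l$.

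The second step is a counting argument applied to the variable permutation $\sigma_{\rho}(x_m)=x_{\rho(m)}$. Partition $M$ as $M_{00}\sqcup M_{10}\sqcup M_{01}\sqcup M_{11}$ according to the presence of $x_k$ and $x_l$ in each monomial. A direct check, using the identity $\rho^{-1}(l)=k$, shows that $\sigma_{\rho}(M_{00}\cup M_{01})$ contains no $x_l$ and thus lies in $M_{00}\cup M_{10}$, while symmetrically $\sigma_{\rho}(M_{10}\cup M_{11})\subseteq M_{01}\cup M_{11}$. Since $\sigma_{\rho}$ is a bijection of $M$, these two inclusions force $|M_{10}|=|M_{01}|$. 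The decreasing property independently supplies the injection $\phi\colon M_{01}\hookrightarrow M_{10}$, $m\mapsto m\cdot x_k/x_l$ (valid because $k<l$ makes the replacement a downward move in the partial order), and the cardinality equality promotes $\phi$ to a bijection; its inverse $m\mapsto m\cdot x_l/x_k$ then carries $M_{10}$ entirely into $M_{01}\subseteq M$, which is precisely the statement that the transposition $(k,l)$ belongs to $A\text{-}Aut(M)$.

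Finally I would propagate from $(k,l)$ to $(i,j)$ with an interval lemma: if $(k,l)\in A\text{-}Aut(M)$ with $k<l$, then for every $k\le k'<k''\le l$ the transposition $(k',k'')$ is also in $A\text{-}Aut(M)$. Its proof is a short case analysis on whether $m\in M$ contains $x_k$ or $x_l$, using the decreasing property to descend to partial-order-smaller monomials and the known $(k,l)$-swap to jump across. Specializing to $k'=i,k''=j$, which is permitted by $k\le i<j\le l$, yields $\pi=(i,j)\in A\text{-}Aut(M)$. The step I anticipate as most delicate is the verification of the two set-theoretic inclusions for $\sigma_{\rho}$ in the middle paragraph: one must check in each of the four pieces $M_{ab}$ that the $\rho$-translated indices do (or do not) include $l$, which amounts to carefully tracking $\rho^{-1}(l)=k$. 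Beyond this, the Bruhat reduction and the interval lemma are routine, using only the decreasing structure of $M$ and the inclusion $LTA\subseteq A\text{-}Aut(M)$ from \cite{b23}.
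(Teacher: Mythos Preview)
Your proposal is correct and takes a genuinely different route from the paper. The paper reduces Theorem~1 to the adjacent case (its Theorem~2: $a_{i,i+1}=1\Rightarrow(i,i+1)\in A\text{-}Aut(M)$) by elementary row and column operations inside $LTA$, and then proves Theorem~2 through an inductive construction of matrices $\bm{A}^{(1)},\dots,\bm{A}^{(r)}$: at each step a rank estimate on a suitably chosen submatrix (Lemma~2), a column addition (Lemma~3), and the determinant expansion of the monomial image (Lemma~1) are combined to exhibit the target monomial $x_{i_1}\cdots x_{i_{m-1}}x_{i+1}x_{i_{m+1}}\cdots x_{i_{r+1}}$ as a term in the image of some $\preceq$-smaller monomial. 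Your argument replaces all of this with three structural moves: the Bruhat decomposition over $\mathbb{F}_2$ reduces to a permutation matrix $\bm{P}\in A\text{-}Aut(M)$; the observation that a \emph{permutation} automorphism must permute the monomial basis $M$ itself converts the problem into the cardinality identity $|M_{10}|=|M_{01}|$, which together with the decreasing injection $M_{01}\hookrightarrow M_{10}$ forces $(k,l)\in A\text{-}Aut(M)$; and an interval lemma descends from $(k,l)$ to $(i,j)\subseteq[k,l]$. The paper's approach is self-contained and avoids invoking Bruhat, but pays for this with the fairly delicate bookkeeping of the $\bm{A}^{(k)}$ sequence across three cases. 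Your approach is shorter and more conceptual; the counting step in particular isolates the clean intermediate statement that for any variable permutation $\rho\in A\text{-}Aut(M)$ and any $k$ with $\rho(k)>k$, the transposition $(k,\rho(k))$ already lies in $A\text{-}Aut(M)$, which is arguably a more transparent replacement for the paper's Theorem~2.
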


In order to prove \textbf{Theorem 1}, we only need to prove the following Theorem.

\begin{theo}
Let $C(M)$ be a decreasing monomial code in $n$ variables with information set $M$. If \ $\exists \ \bm{A} \in A \text{-}Aut(M), a_{i,i+1}=1$, then 
$$\pi = (i,i+1) \in A \text{-}Aut(M) $$
\end{theo}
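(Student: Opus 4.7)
My plan is to prove directly that the transposition $\pi=(i,i+1)$ preserves the monomial information set $M$; since evaluation vectors of distinct monomials are linearly independent and $\pi$ sends monomials to monomials, this is equivalent to $\pi\in A\text{-}Aut(M)$. I split cases according to how a monomial $f\in M$ interacts with $\{x_i,x_{i+1}\}$: if $f$ contains both or neither of these variables then $\pi(f)=f\in M$ trivially; if $x_{i+1}\mid f$ and $x_i\nmid f$ then $\pi(f)\prec f$, so $\pi(f)\in M$ by the decreasing property; only the case $x_i\mid f$, $x_{i+1}\nmid f$ is nontrivial. Thus the theorem reduces to showing: for every monomial $f=x_ih\in M$ with $h$ not involving $x_i$ or $x_{i+1}$, the monomial $x_{i+1}h$ lies in $M$.

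To handle this reduced claim I would use the hypothesized automorphism $\bm A$. I first absorb the translation part of $(\bm A,\bm b)$ into an LTA factor, so WLOG $\bm b=\bm 0$. Then $\bm A(f)=\bigl(\sum_{j}a_{i,j}x_j\bigr)\bm A(h)$ must lie in $C(M)$, and after reducing modulo $x_k^2=x_k$ every monomial appearing in its canonical expansion must belong to $M$. The term with $j=i+1$ contributes $x_{i+1}\bm A(h)$; moreover, since $h\preceq f$ and the code is decreasing, $h\in M$, hence $\bm A(h)\in C(M)$, which already puts every monomial of $\bm A(h)$ in $M$. The task thus narrows to showing that the specific monomial $x_{i+1}h$ appears with nonzero coefficient in the canonical expansion of $\bm A(f)$.

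I would proceed by induction on $\deg(h)$. The base case $h=1$ is immediate: $\bm A(x_i)=\sum_{j}a_{i,j}x_j$ is a linear polynomial lying in $C(M)$, and linear independence of the variables in $C(M)$ forces each $x_j$ with $a_{i,j}=1$---in particular $x_{i+1}$---to belong to $M$. The inductive step is the main difficulty. Contributions to the coefficient of $x_{i+1}h$ in $\bm A(f)$ come not only from the pair $(j,m)=(i+1,h)$ in the expansion of $\bigl(\sum_j a_{i,j}x_j\bigr)\bm A(h)$, but also from pairs $(j,m)=(i+1,x_{i+1}h)$ collapsed by $x_{i+1}^2=x_{i+1}$ and from pairs $(j,m)$ with $j\in\mathrm{vars}(h)$ collapsed by $x_j^2=x_j$. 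To control these I plan to first pre- and post-multiply $\bm A$ by carefully chosen LTA matrices---which preserves membership in $A\text{-}Aut(M)$ and can be arranged to preserve $a_{i,i+1}=1$---so as to simplify the relevant principal submatrix $\bm A_{V,V}$ with $V=\mathrm{vars}(h)$. After such normalization the dominant contribution becomes $\det(\bm A_{V,V})\neq 0$, and the inductive hypothesis should rule out cancellation from the remaining terms. Verifying that this LTA normalization is always achievable without destroying the key entry $a_{i,i+1}=1$ is the central technical obstacle I anticipate.
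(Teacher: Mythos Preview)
Your case split is exactly right, and the reduction to ``$x_i h\in M\Rightarrow x_{i+1}h\in M$'' matches the paper. The gap is in how you propose to carry out that last step.

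First, by the paper's Lemma~1 the coefficient of $x_{i+1}h$ in $\bm A(x_ih)$ is precisely $\det\bigl(\bm A_{\{i\}\cup V,\,\{i+1\}\cup V}\bigr)$, a single $(|V|{+}1)\times(|V|{+}1)$ minor; your decomposition into ``$\det(\bm A_{V,V})$ plus cross terms'' is just its Laplace expansion along row $i$. There is no meaningful way for an induction on $\deg(h)$ to ``rule out cancellation from the remaining terms'': the inductive hypothesis tells you that certain monomials $x_{i+1}h'$ lie in $M$, which says nothing about the $\mathbb F_2$-values of the cofactors $\det(\bm A_{V,\,(V\setminus\{j\})\cup\{i+1\}})$. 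So the induction scheme, as stated, does not engage the actual obstacle you yourself flag at the end.

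Second, the paper does \emph{not} try to force $\det\bigl(\bm A'_{\{i\}\cup V,\,\{i+1\}\cup V}\bigr)\neq 0$ for an LTA-modified $\bm A'$ while keeping the same source monomial $x_ih$. Its key move is to let the \emph{row} indices float: starting from the $1\times 1$ minor $a_{i,i+1}=1$, it uses a rank bound and the minor-extension Lemma~2 to grow, one step at a time, an invertible minor with column set $\{i_1,\dots,i_r,i+1\}$ and row set $\{s_1,\dots,s_r,i\}$ where each $s_m\le i_m$; only right-LTA column moves (and Lemma~3) are needed to place the new column at position $i_{k+1}$. The freedom $s_m\le i_m$ is exactly what makes the construction go through, and it is harmless because $x_{s_1}\cdots x_{s_r}x_i\preceq x_{i_1}\cdots x_{i_r}x_i\in M$, so this smaller monomial is still in $M$ and one can apply $\bm A^{(r)}$ to it. Your plan fixes the row set at $V\cup\{i\}$ and hopes LTA normalization on both sides will compensate; even if that can be made to work, you have not supplied the argument, and the induction you propose is not it. The missing idea is precisely this rank/minor-extension step together with the willingness to change the source monomial.
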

\begin{proof} \textbf{Theorem 2} $\Rightarrow$ \textbf{Theorem 1}

If $\exists \ (\bm{A},\bm{b}) \in A \text{-}Aut(M), a_{i,j}=1, i < j$, then we can prove that $\bm{A} \in A \text{-}Aut(M)$.

Because $C(M)$ is a decreasing monomial code, $(\bm{I_n},\bm{b}) \in LTA(n) \subseteq A \text{-}Aut(M)$, where $\bm{I_n}$ is the identity matrix of order $n$. By group's closure property, we have $(\bm{I_n},\bm{b}) \circ (\bm{A},\bm{b})=(\bm{A},\bm{0}) \in A \text{-}Aut(M)$.

If \ $\exists$ $a_{k,k+1}=0$, $i \leq k \leq j-1$, because $\bm{L_1AL_2} \in A \text{-}Aut(M)$, where $\bm{L_1}$, $\bm{L_2}$ are invertible lower triangular matrices. That means we can add the latter columns of $\bm{A}$ to the preceding columns, or add the top rows of $\bm{A}$ to the bottom rows to get a new invertible matrix $\bm{B}$, and $\bm{B} \in A \text{-}Aut(M)$. If $a_{i,k+1}=1$, add the $i$-$th$ row of $\bm{A}$ to the $k$-$th$ row, and if $a_{i,k+1}=0$, add the $j$-$th$ column of $\bm{A}$ to the $(k+1)$-$th$ column, then add the $i$-$th$ row of $\bm{A}$ to the $k$-$th$ row. As a result, we get a new invertible matrix $\bm{B} \in A \text{-}Aut(M)$,
 where $b_{k,k+1}=1, \forall \ i \leq k < j$.

Accoding to \textbf{Theorem 2}, $(i,i+1),\dots,(j-1,j) \in A \text{-}Aut(M)$, so 
$$(j-1,j)\circ \dots \circ(i,i+1) \in A \text{-}Aut(M) $$
By \cite{b23}, we conclude that
$$\pi = (i,j) \in A \text{-}Aut(M)$$
\end{proof}

To prove \textbf{Theorem 2}, we need the following lemmas

\begin{Le}
Denote an affine transformation by $x_{n_1} \dots x_{n_r}\stackrel{\bm{A}}{\longrightarrow}y_{n_1} \dots y_{n_r}$, $y_m = \sum\limits_{i=0}^{n-1}a_{m,i}x_i$. Then $\forall$ $0 \leq j_1,\dots, j_r \leq n-1$, where $j_1,\dots, j_r$ are distinct integers, the coefficient of $x_{j_1} \dots x_{j_r}$ is non-zero in the expansion of $y_{n_1} \dots y_{n_r}$ iff
$$det(\bm{A}_{\{n_1,\dots, n_r\},\{j_1,\dots,j_r\}}) \neq 0$$
\end{Le}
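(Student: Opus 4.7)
The plan is to directly expand the product $y_{n_1} \cdots y_{n_r}$ and identify which terms collapse to the target multilinear monomial. First I would substitute $y_{n_k} = \sum_{i=0}^{n-1} a_{n_k, i} x_i$ and apply distributivity, writing
\begin{equation*}
y_{n_1}\cdots y_{n_r} \;=\; \sum_{(i_1,\dots,i_r)\in [0,n-1]^r} \left(\prod_{k=1}^{r} a_{n_k,i_k}\right) x_{i_1} x_{i_2} \cdots x_{i_r}.
\end{equation*}

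Second, I would use the fact that we work in the Boolean quotient $\mathbb{F}_2[x_0,\dots,x_{n-1}]/(x_i^2 - x_i)$, so every repetition in $(i_1,\dots,i_r)$ causes the resulting monomial to have degree strictly less than $r$. Since $x_{j_1}\cdots x_{j_r}$ has degree exactly $r$ (its indices are distinct by hypothesis), only tuples $(i_1,\dots,i_r)$ consisting of $r$ distinct indices can contribute to the coefficient in question. Among those, $x_{i_1}\cdots x_{i_r} = x_{j_1}\cdots x_{j_r}$ precisely when $\{i_1,\dots,i_r\} = \{j_1,\dots,j_r\}$ as sets, i.e.\ when there exists $\sigma \in S_r$ with $i_k = j_{\sigma(k)}$ for all $k$.

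Third, summing the surviving contributions gives the coefficient as
\begin{equation*}
\sum_{\sigma \in S_r} \prod_{k=1}^{r} a_{n_k,\,j_{\sigma(k)}} \;=\; \mathrm{perm}\bigl(\bm{A}_{\{n_1,\dots,n_r\},\{j_1,\dots,j_r\}}\bigr).
\end{equation*}
Since we are in characteristic $2$, the sign $(-1)^{\sigma}$ is $1$, so permanent and determinant coincide and the coefficient equals $\det\bigl(\bm{A}_{\{n_1,\dots,n_r\},\{j_1,\dots,j_r\}}\bigr)$. The claimed ``iff'' follows immediately: the coefficient is nonzero in $\mathbb{F}_2$ exactly when this determinant is nonzero.

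There is no real obstacle here beyond careful bookkeeping; the only conceptual ingredients are (i) the idempotence $x_i^2 = x_i$ that eliminates non-multilinear terms, and (ii) the coincidence of permanent and determinant over $\mathbb{F}_2$. The lemma is therefore essentially a direct expansion, and the proof can be kept to a few lines.
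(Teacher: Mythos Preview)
Your proposal is correct and follows essentially the same route as the paper: expand $y_{n_1}\cdots y_{n_r}$ by distributivity, observe that only index tuples that are permutations of $(j_1,\dots,j_r)$ contribute to the target degree-$r$ monomial, and then identify the resulting sum $\sum_{\sigma}\prod_k a_{n_k,j_{\sigma(k)}}$ with the determinant via $(-1)^{\sigma}=1$ over $\mathbb{F}_2$. The paper simply bundles all other terms into a remainder $R$ rather than invoking $x_i^2=x_i$ explicitly, but the argument is the same.
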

\begin{proof}
\begin{align*}
& \ \ \ \ y_{n_1} \dots y_{n_r}\\
&=\sum_{m_1=0}^{n-1} \dots \sum_{m_r=0}^{n-1} a_{n_1,m_1} \dots a_{n_r,m_r} x_{m_1} \dots x_{m_r}\\
&=\sum_{\sigma \in S_{\{n_1,\dots,n_r\}}}a_{n_1,\sigma(j_1)} \dots a_{n_r,\sigma(j_r)}x_{j_1} \dots x_{j_r} + R\\
&=\sum_{\sigma \in S_{\{n_1,\dots,n_r\}}}(-1)^{\sigma}a_{n_1,\sigma(j_1)} \dots a_{n_r,\sigma(j_r)}x_{j_1} \dots x_{j_r} + R\\
&=det(\bm{A}_{\{n_1,\dots, n_r\},\{j_1,\dots,j_r\}})x_{j_1} \dots x_{j_r} + R
\end{align*}
The third equality is due to $-1 = 1 \ (mod \ 2)$, and $R$ contains all the terms except $x_{j_1} \dots x_{j_r}$.
\end{proof}

\begin{Le}
Let $\bm{D}$ be a $P \times Q$ matrix, rank($\bm{D}$) = t, if $det(\bm{D}_{\{p_1,\dots,p_r\},\{q_1,\dots,q_r\}}) \neq 0$, and $r < t$, then there exists a submatrix $\bm{D}_{\{p_1,\dots,p_r,p_{r+1},\dots,p_t\},\{q_1,\dots,q_r,q_{r+1},\dots,q_t\}}$ containing $\bm{D}_{\{p_1,\dots,p_r\},\{q_1,\dots,q_r\}}$, and 
$$rank(\bm{D}_{\{p_1,\dots,p_r,p_{r+1},\dots,p_t\},\{q_1,\dots,q_r,q_{r+1},\dots,q_t\}}) = t$$
\end{Le}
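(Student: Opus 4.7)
My plan is to prove this by induction on $k$, extending the original $r \times r$ nonsingular minor one row and one column at a time until reaching size $t$. The base case $k = r$ is exactly the hypothesis. For the inductive step, suppose I have already built a nonsingular $k \times k$ submatrix on rows $P_k = \{p_1,\dots,p_k\}$ and columns $Q_k = \{q_1,\dots,q_k\}$ with $r \leq k < t$; I need to produce indices $p_{k+1} \notin P_k$ and $q_{k+1} \notin Q_k$ such that the submatrix on $P_k \cup \{p_{k+1}\}$, $Q_k \cup \{q_{k+1}\}$ remains nonsingular. This way the final $t \times t$ minor automatically contains the original one.

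For the new row: the $k \times Q$ submatrix on rows $P_k$ has rank exactly $k$, because it already contains the nonsingular minor on $Q_k$ and it has only $k$ rows. Since $\mathrm{rank}(\bm{D}) = t > k$, not every row of $\bm{D}$ lies in the span of rows $p_1,\dots,p_k$; I would pick any witness index as $p_{k+1}$. Then rows $p_1,\dots,p_{k+1}$ of $\bm{D}$ are linearly independent.

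For the new column I would argue by contradiction. Let $v_i$ denote row $p_i$ restricted to the columns in $Q_k$. Because $v_1,\dots,v_k$ are linearly independent in a $k$-dimensional ambient space, there exist unique scalars $c_1,\dots,c_k$ with $v_{k+1} = \sum_i c_i v_i$. If one had $D_{p_{k+1},q} = \sum_i c_i D_{p_i,q}$ for \emph{every} column $q \notin Q_k$, then combined with the identity on $Q_k$ this relation would hold on all columns of $\bm{D}$, forcing row $p_{k+1}$ to be a linear combination of rows $p_1,\dots,p_k$ and contradicting the choice of $p_{k+1}$. Hence there is some $q_{k+1} \notin Q_k$ with $D_{p_{k+1},q_{k+1}} \neq \sum_i c_i D_{p_i,q_{k+1}}$, and the resulting $(k+1) \times (k+1)$ submatrix on $P_k \cup \{p_{k+1}\}$, $Q_k \cup \{q_{k+1}\}$ has its last row independent of the first $k$ rows, so it is nonsingular.

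The main obstacle is really just the bookkeeping: I must make sure the inductive augmentation only \emph{adds} indices to $P_k$ and $Q_k$ so that the original $r \times r$ submatrix is preserved intact, which is automatic from the way I choose $p_{k+1}$ and $q_{k+1}$. The underlying linear-algebra content is the standard matroid augmentation principle that any independent set extends to a basis; nothing here is sensitive to the base field being $\mathbb{F}_2$, so the argument carries over directly to the binary setting used by the rest of the paper.
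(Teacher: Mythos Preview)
Your argument is correct. It differs from the paper's proof only in organization: the paper augments in two bulk passes---first extending $\{p_1,\dots,p_r\}$ to $t$ linearly independent rows of $\bm{D}$, then (working inside the resulting $t\times Q$ full-rank submatrix) extending $\{q_1,\dots,q_r\}$ to $t$ linearly independent columns---whereas you grow the minor one row and one column at a time. Both are instances of the same matroid augmentation principle; the paper's two-pass version is terser, while your step-by-step version makes the nonsingularity of each intermediate minor explicit and spells out why the new column index must lie outside $Q_k$. Neither approach offers a real advantage over the other for this lemma.
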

\begin{proof}
Because 
$$rank\left(\bm{D}_{\{p_1,\dots,p_r\},\{[0,Q-1]\}}\right)=r, \ rank\left(\bm{D}\right)=t$$

We can extend $\{p_1,\dots,p_r\}$ to $\{p_1,\dots,p_r,p_{r+1},\dots,p_t\}$, such that $rank\left(\bm{D}_{\{p_1,\dots,p_r,p_{r+1},\dots,p_t\},\{[0,Q-1]\}}\right)=t$.

For the same reason, we can extend $\{q_1,\dots,q_r\}$ to $\{q_1,\dots,q_r,q_{r+1},\dots,q_t\}$, such that 
$$rank\left(\bm{D}_{\{p_1,\dots,p_r,p_{r+1},\dots,p_t\},\{q_1,\dots,q_r,q_{r+1},\dots,q_t\}}\right)=t$$
\end{proof}

\begin{Le}
Let $\{\bm{a}_1,\dots,\bm{a}_m\}$ be linearly independent column vectors, if $\{\bm{a}_1,\dots,\bm{a}_{m-1},\bm{a}_n\}$ are linearly dependent, then $\{\bm{a}_1,\dots,\bm{a}_{m-1},\bm{a}_n+\bm{a}_m\}$ are linearly independent.
\end{Le}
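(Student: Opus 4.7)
The plan is a short proof by contradiction that uses two elementary facts: a subset of a linearly independent set is linearly independent, and if an independent set becomes dependent upon adding one vector, then that new vector is forced to lie in the span of the original set.

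First, I would observe that $\{\bm{a}_1,\ldots,\bm{a}_{m-1}\}$ is linearly independent as a subset of $\{\bm{a}_1,\ldots,\bm{a}_m\}$. Combined with the hypothesis that $\{\bm{a}_1,\ldots,\bm{a}_{m-1},\bm{a}_n\}$ is linearly dependent, this forces $\bm{a}_n$ to lie in the span of $\bm{a}_1,\ldots,\bm{a}_{m-1}$, so that we can write $\bm{a}_n=\sum_{i=1}^{m-1}c_i\bm{a}_i$ for some scalars $c_i$ (over $\mathbb{F}_2$ in the polar-code setting).

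Next, I would assume for contradiction that $\{\bm{a}_1,\ldots,\bm{a}_{m-1},\bm{a}_n+\bm{a}_m\}$ is linearly dependent, yielding a nontrivial relation
\[
\sum_{i=1}^{m-1}d_i\bm{a}_i+d(\bm{a}_n+\bm{a}_m)=\bm{0}.
\]
The argument then splits on the value of $d$. If $d=0$, the relation reduces to a nontrivial dependency among $\{\bm{a}_1,\ldots,\bm{a}_{m-1}\}$, contradicting its independence. If $d\neq 0$, substituting $\bm{a}_n=\sum_{i=1}^{m-1}c_i\bm{a}_i$ expresses $d\bm{a}_m$ as a linear combination of $\bm{a}_1,\ldots,\bm{a}_{m-1}$ alone, contradicting the linear independence of the full set $\{\bm{a}_1,\ldots,\bm{a}_m\}$.

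There is essentially no real obstacle here: the lemma is a routine linear-algebra fact and the only thing to be careful about is the case split on whether the coefficient $d$ of the modified vector $\bm{a}_n+\bm{a}_m$ vanishes. The lemma is evidently positioned as a bookkeeping tool for the subsequent proof of Theorem 2, where one will want to replace columns of an invertible matrix $\bm{A}$ by sums of columns without destroying invertibility, mirroring the column-manipulation argument already used in the reduction from Theorem 2 to Theorem 1.
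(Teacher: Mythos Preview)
Your proof is correct and follows essentially the same route as the paper: write $\bm{a}_n$ as a combination of $\bm{a}_1,\ldots,\bm{a}_{m-1}$, then derive that $\bm{a}_m$ would lie in their span, contradicting the independence of $\{\bm{a}_1,\ldots,\bm{a}_m\}$. The only cosmetic difference is that you make the case split on $d$ explicit, whereas the paper jumps directly to assuming $\bm{a}_n+\bm{a}_m\in\mathrm{span}\{\bm{a}_1,\ldots,\bm{a}_{m-1}\}$ (implicitly using the independence of the subset to rule out $d=0$).
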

\begin{proof}
Because $\{\bm{a}_1,\dots,\bm{a}_{m-1},\bm{a}_n\}$ are linearly dependent, and $\{\bm{a}_1,\dots,\bm{a}_{m-1}\}$ are linearly independent, we have 
$$\bm{a}_n = \sum\limits_{k=1}^{m-1}c_k \bm{a}_k, \ c_k \in \{ 0,1 \}$$
If 
$$\bm{a}_n + \bm{a}_m  = \sum\limits_{k=1}^{m-1}c'_k \bm{a}_k, \ c'_k \in \{ 0,1 \}$$ 
then 
$$\bm{a}_m  = \sum\limits_{k=1}^{m-1}(c'_k-c_k) \bm{a}_k$$
This contradicts that $\{\bm{a}_1,\dots,\bm{a}_m\}$ are linearly independent.

\end{proof}

\begin{proof}[Proof of \textbf{Theorem 2}]
we need to prove that 

$\forall$ $x_{i_1} \dots x_{i_{r+1}} \in M$, then $y_{i_1} \dots y_{i_{r+1}} \in M$, where 
\begin{equation*}    y_j =
 \begin{cases}
    x_{i+1} & j = i\\
    x_{i} & j = i+1\\
    x_j   & otherwise
\end{cases}
\end{equation*}
and $i_1 < i_2 \dots < i_{r+1}$.

We take a divide-and-conquer approach. If $i,i+1 \notin \{i_1,\dots,i_{r+1}\}$ or $i,i+1 \in \{i_1,\dots,i_{r+1}\}$, the proof is straightforward as $y_{i_1} \dots y_{i_{r+1}}=x_{i_1} \dots x_{i_{r+1}} \in M$. 

If $i \notin \{i_1,\dots,i_{r+1}\}$ and $i+1 \in \{i_1,\dots,i_{r+1}\}$, the proof can be obtained by the "decreasing" property. Assuming $i_m = i+1,1 \leq m \leq r+1$, then 
\begin{align*}
&\ \ \ \ y_{i_1} \dots y_{i_{r+1}}\\
&=x_{i_1} \dots x_{i_{m-1}} x_i x_{i_{m+1}} \dots x_{i_{r+1}} \\
&\preceq x_{i_1} \dots x_{i_{m-1}} x_{i+1} x_{i_{m+1}} \dots x_{i_{r+1}} \\
&= x_{i_1} \dots x_{i_{r+1}} 
\end{align*}
Because $C(M)$ is a decreasing monomial code, then $y_{i_1} \dots y_{i_r+1} \in M$.

What remains to be proved is the most tricky case where $i \in \{i_1,\dots,i_{r+1}\}$ and $i+1 \notin \{i_1,\dots,i_{r+1}\}$. We further divide it to the following three cases.

\emph{Case 1}: $i = i_{r+1}$

Consider the submatrix $\bm{A}_{\{[0,i_1],i\},\{[i_1,n-1]\}}$, because
\begin{align*}
&rank\left(\bm{A}_{\{[0,i_1],i\},\{[i_1,n-1]\}}\right) + rank\left(\bm{A}_{\{[0,i_1],i\},\{[0,i_1-1]\}}\right)\\
&\geq rank\left(\bm{A}_{\{[0,i_1],i\},\{[0,n-1]\}}\right) = i_1+2
\end{align*}
and $rank\left(\bm{A}_{\{[0,i_1],i\},\{[0,i_1-1]\}}\right) \leq i_1$, we obtain that
$$rank\left(\bm{A}_{\{[0,i_1],i\},\{[i_1,n-1]\}}\right) \geq 2$$

According to \textbf{Lemma 2}, $\exists \ 0 \leq s_1 \leq i_1$, $i_1 \leq t_1 \leq n-1$, $i \neq s_1$ and $i+1 \neq t_1$, such that $det\left(\bm{A}_{\{ s_1,i \},\{t_1,i+1\}}\right) \neq 0$, if $det\left(\bm{A}_{\{s_1,i \},\{ i_1,i+1\}}\right) \neq 0$, define $\bm{A}^{(1)}=\bm{A}$, otherwise, add the $t_1$-$th$ column of $\bm{A}$ to the $i_1$-$th$ column and denote the new matrix by $\bm{A}^{(1)}$, according to \textbf{Lemma 3} $det\left(\bm{A}^{(1)}_{\{s_1,i \},\{ i_1,i+1\}}\right) \neq 0$ and $\bm{A}^{(1)} \in A \text{-}Aut(M)$. An example is shown in Fig. \ref{fig3}.

\begin{figure}[htbp]
	\centerline{\includegraphics[width=0.4\textwidth]{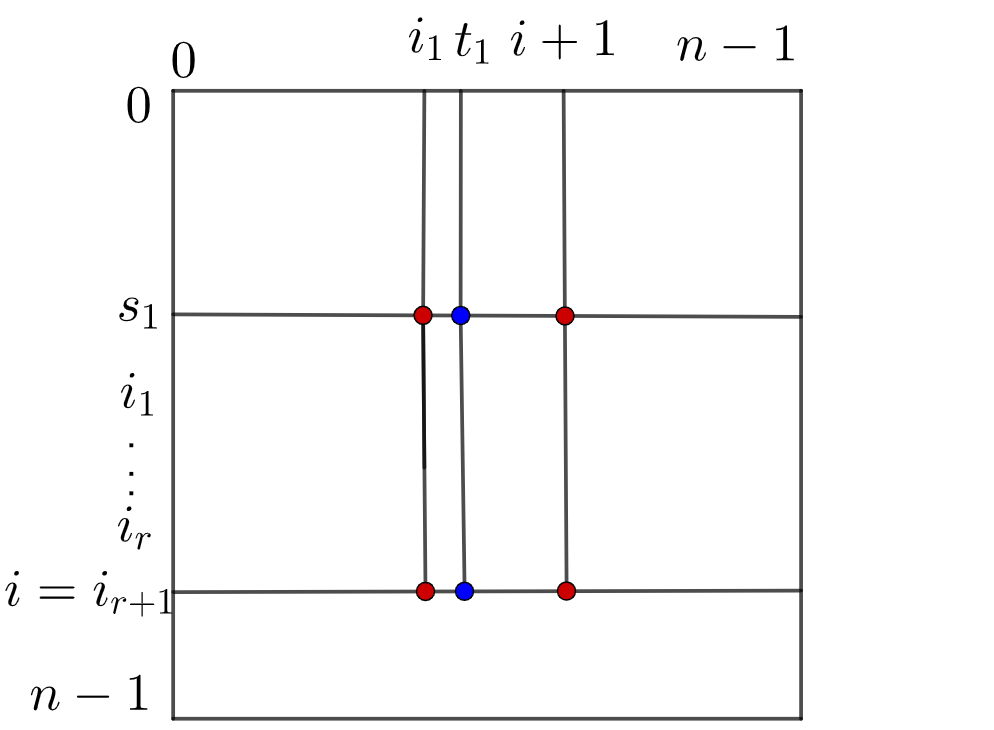}}
	\caption {Operations on the BLTA matrix $\bm{A}$ for \emph{case 1}}
	\label{fig3}
\end{figure}

Suppose we have $\bm{A}^{(k)} \in A \text{-}Aut(M)$, $1 \leq k < r$, and $det\left(\bm{A}^{(k)}_{\{s_1,\dots,s_k,i \},\{ i_1,\dots,i_k,i+1 \}}\right) \neq 0$, $s_m \leq i_m$, $1 \leq m \leq k$.
Because
\begin{align*}
& \ \ \ \ rank\left(\bm{A}^{(k)}_{\{[0,i_{k+1}],i\},\{i_1,\dots,i_{k},[i_{k+1},n-1]\}}\right)\\
&\geq i_{k+1}+2 - (i_{k+1}-k)=k+2
\end{align*}
According to \textbf{Lemma 2}, $\exists \ 0 \leq s_{k+1} \leq i_{k+1}$, $i_{k+1} \leq t_{k+1} \leq n-1$, and $s_1,\dots,s_{k+1},i$ is a set of distinct indices, and $i_1,\dots,i_k,t_{k+1},i+1$ is another set of distinct indices, such that $det\left(\bm{A}^{(k)}_{\{s_1,\dots,s_{k+1},i\},\{i_1,\dots,i_k,t_{k+1},i+1\}}\right) \neq 0$, again if $det\left(\bm{A}^{(k)}_{\{s_1,\dots,s_{k+1},i\},\{i_1,\dots,i_k,i_{k+1},i+1\}}\right) \neq 0$, define $\bm{A}^{(k+1)}=\bm{A}^{(k)}$, otherwise, add the $t_{k+1}$-$th$ column of $\bm{A}^{(k)}$ to the $i_{k+1}$-$th$ column and denote the new matrix by $\bm{A}^{(k+1)}$.

Finally, we obtain $A^{(k+1)} \in A \text{-}Aut(M)$, and $det\left(\bm{A}^{(k+1)}_{\{s_1,\dots,s_{k+1},i\},\{i_1,\dots,i_{k+1},i+1\}}\right) \neq 0$. 

When $k = r-1$, we have $A^{(r)} \in A \text{-}Aut(M)$, and $det\left(\bm{A}^{(r)}_{\{s_1,\dots,s_r,i\},\{i_1,\dots,i_r,i+1\}}\right) \neq 0$. Because $s_m \leq i_m$, $1 \leq m \leq r$,
$$x_{s_1} \dots x_{s_r} x_i \preceq x_{i_1} \dots x_{i_r} x_i$$
By definition of dereasing monomial codes, we know $x_{s_1} \dots x_{s_r} x_i \in A\text{-}Aut(M)$.

Denote an affine transformation by $ x_{s_1} \dots x_{s_r} x_i \stackrel{\bm{A}^{(r)}}{\longrightarrow} y_{s_1} \dots y_{s_r} y_i$, because $det\left(\bm{A}^{(r)}_{\{s_1,\dots,s_r,i\},\{i_1,\dots,i_r,i+1\}}\right) \neq 0$ and due to \textbf{Lemma 1}, we have
$$ y_{s_1} \dots y_{s_r} y_i= x_{i_1}\dots x_{i_r} x_{i+1}+ R$$
Therefore, $x_{i_1}\dots x_{i_r} x_{i+1}\in M$\cite{b25}.

\emph{Case 2}: $i = i_1$\\
Because
\begin{align*}
& \ \ \ \ rank \left(\bm{A}_{\{[0,i_2]\},\{i+1,[i_2,n-1]\}}\right)\\ 
&\geq i_2+1-(i_2-1) = 2
\end{align*}
According to \textbf{Lemma 2}, as in \emph{Case 1}, we can obtain a matrix $\bm{A}^{(1)}$, such that $det\left(\bm{A}^{(1)}_{\{i,s_2 \},\{ i+1,i_2\}}\right) \neq 0$ and $\bm{A}^{(1)} \in A \text{-}Aut(M)$. An example is shown in Fig. \ref{fig4}.
\begin{figure}[htbp]
	\centerline{\includegraphics[width=0.4\textwidth]{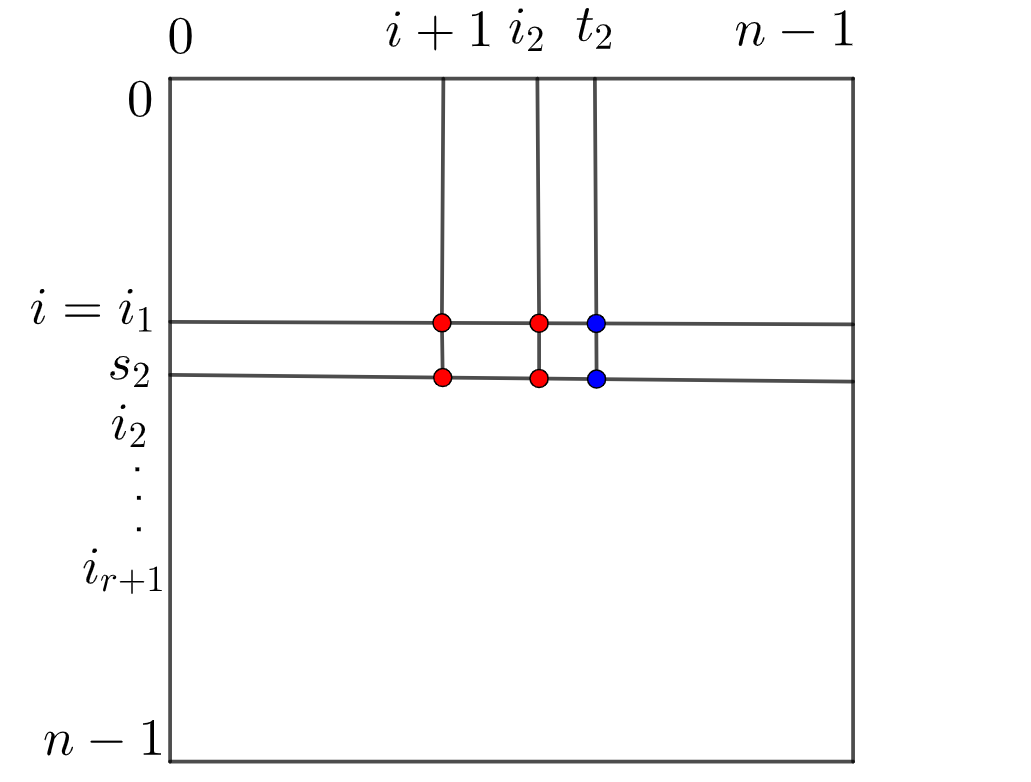}}
	\caption {Operations on the BLTA matrix $\bm{A}$ for \emph{case 2}}
	\label{fig4}
\end{figure}

Suppose we have $\bm{A}^{(k)} \in A \text{-}Aut(M)$, $1 \leq k < r$, and $det\left(\bm{A}^{(k)}_{\{i,s_2,\dots,s_{k+1} \},\{ i+1,i_2,\dots,i_{k+1} \}}\right) \neq 0$, $s_m \leq i_m$, $2 \leq m \leq k+1$.
Because
\begin{align*}
& \ \ \ \ rank\left(\bm{A}^{(k)}_{\{[0,i_{k+2}]\},\{i+1,i_2,\dots,i_{k+1},[i_{k+2},n-1]\}}\right)\\
&\geq i_{k+2}+1 - (i_{k+2}-k-1)=k+2
\end{align*}
Again as in \emph{Case 1}, we can obtain a matrix $\bm{A}^{(k+1)} \in A \text{-}Aut(M)$, and $det\left(\bm{A}^{(k+1)}_{\{i,s_2,\dots,s_{k+2}\},\{i+1,i_2,\dots,i_{k+2}\}}\right) \neq 0$.

When $k = r-1$, we have $\bm{A}^{(r)} \in A \text{-}Aut(M)$, and $det\left(\bm{A}^{(r)}_{\{i,s_2,\dots,s_{r+1}\},\{i+1,i_2,\dots,i_{r+1}\}}\right) \neq 0$, $s_m \leq i_m$, $2 \leq m \leq r+1$. The rest of proof is the same as in \emph{Case 1}.

\emph{Case 3}: $i = i_m$, $1 < m < r+1$

According to \emph{Case 1}, we can obtain a matrix $\bm{A}^{(m-1)} \in A \text{-}Aut(M)$, 
$$det\left(\bm{A}^{(m-1)}_{\{s_1,\dots,s_{m-1},i\},\{i_1,\dots,i_{m-1},i+1 \}}\right) \neq 0$$
Then according to \emph{Case 2}, we can obtain a matrix $\bm{A}^{(r)} \in A \text{-}Aut(M)$, such that
\begin{align*}
&\ \ \ \ det\left(\bm{A}^{(r)}_{\{s_1,\dots,s_{m-1},i,s_{m+1}\dots s_{r+1}\},\{i_1,\dots,i_{m-1},i+1,i_{m+1},\dots i_{r+1} \}}\right)\\
&\neq 0
\end{align*}
The rest of proof is the same as in \emph{Case 1}. 
\end{proof}

\section{Conclusion}
In this paper, we prove the conjecture that BLTA is the complete affine automorphism group for decreasing monomial codes, including decreasing polar codes and RM codes. Our proof guarantees that all the automorphisms defined by affine transformation can be found for decreasing polar codes.

\section{Acknowledgement}
The authors thank Xianbin Wang for the fruitful discussions that inspired this work, and Zhipeng Gao for the valuable comments.




\end{document}